\journal{CNSNS}
\tikzset{fontscale/.style = {font=\relsize{#1}}
    }
    \newtheorem{definition}{Definition}
\newtheorem{theorem}{Theorem}
\newtheorem{lemma}{Lemma}
\newtheorem{corollary}{Corollary}
\newcommand{\be}{\begin{equation}}
\newcommand{\en}{\end{equation}}
\def\d{{\rm d}}
\def\uv{{\mathbf{u}}}
\def\rv{{\mathbf{r}}}
\def\xv{{\mathbf{x}}}
\def\yv{{\mathbf{y}}}
\def\nv{{\mathbf{n}}}
\def\ev{{\mathbf{e}}}
\def\Bv{{\mathbf{B}}}
\def\Ev{{\mathbf{E}}}
\def\Cv{{\mathbf{C}}}
\def\Av{{\mathbf{A}}}
\def\cl{{\mathcal{L}}}
\def\grad{\boldsymbol\grad}
\def\grad{{\rm grad}\, }
\begin{document}

\begin{frontmatter}



\title{On self and mutual winding helicity}


\author[gla]{Simon Candelaresi}
\author[dun]{Gunnar Hornig}
 \author[gla]{David MacTaggart\footnote{Corresponding author: david.mactaggart@glasgow.ac.uk}} 
 \author[gla]{Radostin D. Simitev}

\address[gla]{School of Mathematics and Statistics, University of Glasgow, Glasgow, G12 8SQ, UK}

\address[dun]{Division of Mathematics, University of Dundee, Dundee, DD1 4HN, UK}

\begin{abstract}
The topological underpinning of magnetic fields connected to a planar boundary is naturally described by field line winding. This observation leads to the definition of \emph{winding helicity}, which is closely related to the more commonly calculated \emph{relative helicity}. Winding helicity, however, has several advantages, and we explore some of these in this work. In particular, we show, {by splitting the domain into distinct subregions,} that winding helicity can be decomposed naturally into ``self'' and ``mutual'' components and that these quantities can be calculated, in practice, for magnetic fields with complex geometries and topologies. Further, winding provides a unified topological description from which known expressions for self and mutual helicity can be readily derived and generalized. We illustrate the application of calculating self and mutual winding helicities in a simulation of an evolving magnetic field with non-trivial field line topology.

\end{abstract}

\begin{keyword}
Magnetohydrodynamics \sep Helicity \sep Magnetic topology


\end{keyword}

\end{frontmatter}

\section{Introduction}

\subsection{Classical helicity}
Magnetic helicity is an invariant quantity of ideal magnetohydrodynamics (MHD) that encodes information about the topology of magnetic field lines. The initial understanding of magnetic helicity was developed for closed magnetic fields \cite{1,2}. In a simply connected domain $\Omega$ (see \cite{3} for extensions to multiply connected domains), the magnetic helicity has the form
\be\label{hel_class}
H=\int_{\Omega}\Av\cdot\Bv\,\d V,
\en
where $\Bv$ is the magnetic induction (hereafter referred to as the magnetic field) and $\Av$ is the magnetic vector potential, satisfying $\Bv=\nabla\times\Av$. If $\Bv\cdot\nv=0$ on $\partial\Omega$, where $\nv$ is the unit normal to $\partial\Omega$, $H$ from equation (\ref{hel_class}) is gauge invariant. 

The topological interpretation of $H$ can be seen by writing the vector potential in terms of the {Biot-Savart} operator \cite{2,4,5},
\be\label{biot-savart}
\Av(\xv) = \frac{1}{4\pi}\int_{\Omega}\Bv(\yv)\times\frac{\xv-\yv}{|\xv-\yv|^3}\,\d^3y.
\en
Substituting equation (\ref{biot-savart}) into equation (\ref{hel_class}) leads to
\be\label{hel_bs}
H=\frac{1}{4\pi}\int_{\Omega\times\Omega}\Bv(\xv)\cdot\Bv(\yv)\times\frac{\xv-\yv}{|\xv-\yv|^3}\,\d^3x\,\d^3y,
\en
which is a representation of the Gauss linking number weighted by magnetic flux.  

 Let $\Omega=\sqcup_i\Omega_i\subset\mathbb{R}^3$, i.e. $\Omega$ is composed of disjoint subdomains $\Omega_i$. In each subdomain there is a magnetic field $\Bv_i$ with $\Bv_i\cdot\nv=0$ on $\partial\Omega_i$. For demonstration purposes, we will, for the moment, consider just two disjoint subdomains. Let $\Bv_1$ be in $\Omega_1$ and $\Bv_2$ be in $\Omega_2$. With $\Bv=\Bv_1+\Bv_2$ in $\Omega=\Omega_1\sqcup\Omega_2$, equation (\ref{hel_bs}) can be expanded as {(see \cite{24})}
\be\label{hel_exp}
H_{\Omega_1\sqcup\Omega_2}(\Bv) = H_{\Omega_1}(\Bv_1)+H_{\Omega_2}(\Bv_2)+2H(\Bv_1,\Bv_2),
\en
where the first two terms on the right-hand side of equation (\ref{hel_exp}) are given by equation (\ref{hel_bs}) in the domains $\Omega_1$ and $\Omega_2$ respectively, and 
\be\label{hel_class_mutual}
H(\Bv_1,\Bv_2) = \frac{1}{4\pi}\int_{\Omega_1\times\Omega_2}\left(\Bv_1(\xv)\times\Bv_2(\yv)\cdot\frac{\xv-\yv}{|\xv-\yv|^3}\right)\,\d^3x\,\d^3y.
\en
The quantities $H_{\Omega_1}$ and $H_{\Omega_2}$ are called the \emph{self helicities} and depend on the internal magnetic field topology as well as the domain topology \cite{3}. Equation (\ref{hel_class_mutual}) represents the \emph{mutual helicity} which takes into account the linkage of different subdomains. Indeed, in Moffatt's seminal work on helicity, it was the mutual helicity of magnetic (and vortex) fields that he described. 

\subsection{Relative helicity}
For magnetic fields which are not closed but have a {non-zero} normal component on the domain boundary, the classical helicity in equation (\ref{hel_class}) is no longer gauge invariant. In order to have a gauge invariant form of helicity in this situation, we have to consider a relative measure of helicity that compares two different magnetic fields with the same normal components on the boundary. Let $\Bv_1$ and $\Bv_2$ be such magnetic fields, then the gauge invariant relative magnetic helicity \cite{6,7} is given by
\be\label{rel_hel}
H_R = \int_{\Omega}(\Av_1+\Av_2)\cdot(\Bv_1-\Bv_2)\,\d^3x,
\en
where $\Av_1$ and $\Av_2$ are the vector potentials of $\Bv_1$ and $\Bv_2$ respectively. This measure of helicity has been used extensively in solar active region applications \cite{9,19}. Although gauge invariant decompositions of relative helicity {have been proposed} \cite{8}, there is no general self-mutual decomposition for relative helicity as for classical helicity in (\ref{hel_exp}). This is an important point, particularly with respect to what is reported in the literature, which we will return to shortly. First, we describe an alternative measure of helicity for open magnetic fields which does admit a self-mutual decomposition.  

\section{Winding helicity}
Henceforth, $\Omega$ will represent a simply connected domain between two horizontal planes at heights $z=0$ and $z=h$. On these planes the magnetic field can have $\Bv\cdot\nv\ne 0$. The side boundaries (any boundaries other than the two planes) will be magnetic surfaces. In what follows, the results still hold if the horizontal planes are also magnetic surfaces. Each disjoint subdomain of $\Omega$ is labelled $\Omega_i$ and has a magnetic surface as its boundary apart form where it is in contact with the horizontal planes. Subdomains can be connected to one, two or neither of the planes. The subdomains can be simply connected (no holes) or multiply connected (contain holes).

Rather than seeking a general gauge invariant measure of helicity, we choose one with a particular gauge that provides a clear topological interpretation (just as the Coulomb gauge, with the Biot-Savart operator, does for classical helicity). We call this measure of helicity \emph{winding helicity} \cite{10,11,12,13} since it represents the average pairwise winding between all local portions of field lines, weighted by magnetic flux.

\begin{definition}
The winding gauge $\Av^W$ satisfies $\nabla^{\perp}\cdot\Av^W=0$, where $\nabla^{\perp}=(\partial/\partial x, \partial/\partial y, 0)^T$ and can be written as
\be\label{wind_gauge}
\Av^W(\Bv)(x_1,x_2,z) = \frac{1}{2\pi}\int_{S_z} \Bv(y_1,y_2,z)\times\frac{\xv-\yv}{|\xv-\yv|^2}\,\d^2y,
\en
where $S_z$ is a horizontal slice of $\Omega$ at height $z$, $\xv = (x_1(z),x_2(z),0)^T$ and $\yv=(y_1(z),y_2(z),0)^T$.
\end{definition}
\noindent The position vectors $\xv$ and $\yv$ lie on horizontal planes at different heights throughout $\Omega$, so the above notation should be interpreted as: given some height $z$, there are horizontal vectors $\xv(z)$ and $\yv(z)$ pointing to magnetic field lines intersecting the horizontal plane at height $z$. {This situation is illustrated in Figure \ref{planes}. Note that field lines need not be monotonically increasing in height and can intersect the same plane in multiple locations. A detailed comparison of the winding gauge with alternative gauges has been performed by Prior and Yeates \cite{10}, who highlight the advantage of the winding gauge.}

\begin{figure}
\centering
\begin{tikzpicture}
\draw[thick] (-5,5)--(5,5);
\draw[thick] (-5,0)--(5,0);
\draw[thick] (-5,5)--(-5,0);
\draw[thick] (5,5)--(5,0);

\node at (4, 0.5) [right] {$S_z$};

\draw [fill,red] (-1,2.5) circle [radius=.05];
\draw [fill,blue] (3.5,4) circle [radius=.05];

\draw[thick,dashed] (-1,2.5)--(3.5,4);

\draw[thick,->] (-1.5,1)--(-1,2.5);
\draw[thick,->] (-1.5,1)--(3.5,4);

\node at (-1.5,1) [below] {$O$};

\draw[thick,dashed] (-1,2.5)--(0.6,2.5);

\node at (-0.05, 2.7) [right] {$\theta$};

\node at (1, 2.2) [right] {$\mathbf{x}$};

\node at (-1.8, 1.8) [right] {$\mathbf{y}$};

\end{tikzpicture}
\caption{\label{planes}A projection looking down vertically on a horizontal plane $S_z$ with the intersection of two  field lines marked by blue and red points. The horizontal position vectors $\xv$ and $\yv$ of these points are indicated, together with the polar angle $\theta$ of $\xv-\yv$.}
\end{figure}
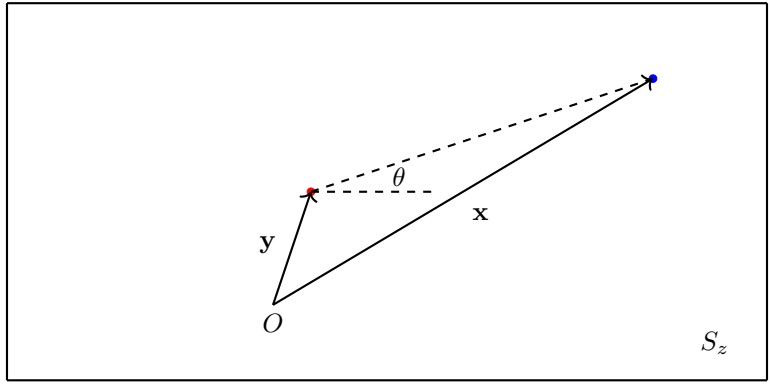

The winding gauge is then used to construct the winding helicity as follows.
\begin{definition}
The winding helicity is defined as
\be\label{wind_hel}
H^W = \int_{\Omega}\Av^W\cdot\Bv\,\d^3x
=\frac{1}{2\pi}\int_0^h\int_{S_z\times S_z}\Bv(\xv)\cdot\Bv(\yv)\times\frac{\xv-\yv}{|\xv-\yv|^2}\,\d^2x\,\d^2y\,\d z.
\en
\end{definition}
\noindent In order to {simplify the} notation, the fact that the vectors $\xv$ and $\yv$ vary in $z$ will be suppressed and assumed implicitly. This should not cause confusion as the integrals are performed in horizontal planes first (each with a different value of $z$) and are then integrated in height.

Notice the similarities between (\ref{biot-savart}) and (\ref{wind_gauge}) and (\ref{hel_bs}) and (\ref{wind_hel}). {Prior and Yeates \cite{10}} prove that the winding helicity can be expressed directly in terms of the winding of field lines, i.e.
\be\label{wind_hel2}
H^W =\frac{1}{2\pi}\int_0^h\int_{S_z\times S_z}\frac{\d\theta(\xv,\yv)}{\d z}B_z(\xv)B_z(\yv)\,\d^2x\,\d^2y\,\d z,
\en
where $\theta(\xv,\yv)$ 
is the polar angle of  $\xv-\yv$ and is given by
\be\label{theta}
\theta(\xv,\yv) = {\rm arctan}\left[\frac{(\xv-\yv)\cdot\ev_y}{(\xv-\yv)\cdot\ev_x}\right],
\en
where $\{\ev_x,\ev_y,\ev_z\}$ is the standard Cartesian basis. It is from equation (\ref{wind_hel2}) that the winding interpretation of helicity is made clear.

Winding helicity has a close relationship to relative helicity and {Prior and Yeates \cite{10}} also derive the following result.
\begin{lemma}\label{lemma}
Let $\Bv$ and $\Bv'$ be two magnetic fields in $\Omega$ with the same normal components on the boundary. Then the relative and winding helicities are related by
\be
H_R(\Bv,\Bv') = H^W(\Bv) - H^W(\Bv').
\en
\end{lemma}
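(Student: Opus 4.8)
\noindent The plan is to evaluate $H_R(\Bv,\Bv')$ from its definition (\ref{rel_hel}) in the winding gauge. First I would observe that, because $\Bv$ and $\Bv'$ share the same normal component on $\partial\Omega$, the right-hand side of (\ref{rel_hel}) is unchanged if either vector potential is shifted by a gradient: replacing $\Av_1$ by $\Av_1+\nabla\phi$ adds $\int_\Omega\nabla\phi\cdot(\Bv-\Bv')\,\d^3x=\int_{\partial\Omega}\phi\,(\Bv-\Bv')\cdot\nv\,\d S=0$, using $\nabla\cdot\Bv=\nabla\cdot\Bv'=0$, and likewise for $\Av_2$. I may therefore take $\Av_1=\Av^W(\Bv)$ and $\Av_2=\Av^W(\Bv')$, which are genuine vector potentials of $\Bv$ and $\Bv'$ by the definition of the winding gauge. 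This is the only step of the argument that uses the hypothesis on the normal components.

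With this choice, expanding (\ref{rel_hel}) term by term and using the definition (\ref{wind_hel}) gives
\be
H_R(\Bv,\Bv')=H^W(\Bv)-H^W(\Bv')+\left(\int_\Omega\Av^W(\Bv')\cdot\Bv\,\d^3x-\int_\Omega\Av^W(\Bv)\cdot\Bv'\,\d^3x\right),
\en
so it remains to prove that the two ``mutual winding'' terms cancel, i.e.\ that the pairing $(\Bv,\Bv')\mapsto\int_\Omega\Av^W(\Bv)\cdot\Bv'\,\d^3x$ is symmetric. Inserting the kernel (\ref{wind_gauge}) and applying Fubini, this pairing equals
\be
\frac{1}{2\pi}\int_0^h\int_{S_z\times S_z}\frac{\bigl(\Bv(\yv)\times(\xv-\yv)\bigr)\cdot\Bv'(\xv)}{|\xv-\yv|^2}\,\d^2x\,\d^2y\,\d z,
\en
whose integrand is the scalar triple product $[\Bv(\yv),\,\xv-\yv,\,\Bv'(\xv)]$ divided by $|\xv-\yv|^2$. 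In the companion integral for $\int_\Omega\Av^W(\Bv')\cdot\Bv\,\d^3x$ I would relabel the dummy variables $\xv\leftrightarrow\yv$: the kernel $|\xv-\yv|^{-2}$ is unaffected, while the numerator becomes $[\Bv'(\xv),\,\yv-\xv,\,\Bv(\yv)]$. Using $\yv-\xv=-(\xv-\yv)$ together with the sign change of the triple product under interchange of two of its arguments, $[\Bv'(\xv),\,\yv-\xv,\,\Bv(\yv)]=[\Bv(\yv),\,\xv-\yv,\,\Bv'(\xv)]$, so the two integrals agree and the bracketed difference vanishes, giving $H_R(\Bv,\Bv')=H^W(\Bv)-H^W(\Bv')$.

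The argument is largely bookkeeping; the points that need care are the gauge-invariance reduction in the first paragraph --- which is precisely where the equal-normal-component hypothesis is indispensable, since without it (\ref{rel_hel}) is not even well defined --- and the tracking of signs in the triple-product manipulation. A cleaner-looking but essentially equivalent route to the symmetry of the mutual term is the integration-by-parts identity $\int_\Omega\Av_1\cdot(\nabla\times\Av_2)\,\d^3x-\int_\Omega\Av_2\cdot(\nabla\times\Av_1)\,\d^3x=\int_{\partial\Omega}(\Av_2\times\Av_1)\cdot\nv\,\d S$: on the planes $z=0$ and $z=h$ the surface integrand involves only the horizontal components of $\Av^W(\Bv)$ and $\Av^W(\Bv')$, which by (\ref{wind_gauge}) depend solely on $B_z$ and therefore coincide for $\Bv$ and $\Bv'$, so that part vanishes, but the magnetic side boundaries then require a short separate argument --- which is why I would present the direct kernel symmetry above.
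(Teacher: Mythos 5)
Your proof is correct and follows essentially the route the paper relies on: the paper defers this lemma to Prior and Yeates \cite{10}, and your two ingredients --- gauge invariance of the Finn--Antonsen form (\ref{rel_hel}) under the equal-normal-component hypothesis, plus symmetry of the pairing $\int_\Omega\Av^W(\Bv)\cdot\Bv'\,\d^3x$ via the triple-product relabelling --- are exactly what is needed, the latter being precisely the identity the paper establishes as equation (\ref{symmetric}) in the proof of Theorem \ref{sm_thrm}. The only gloss is your claim that $\Av^W(\Bv)$ is a vector potential of $\Bv$ ``by the definition of the winding gauge'': that $\nabla\times\Av^W(\Bv)=\Bv$ is a nontrivial fact requiring $\nabla\cdot\Bv=0$ and the flux-surface side boundaries (proved in \cite{10}), though the paper treats it as given.
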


\section{\label{sec:self_mut}Helicity decomposition}
We are now in a position to state our main result - the decomposition of winding helicity into self and mutual components.

\begin{theorem}\label{sm_thrm}
The winding helicity in the domain $\Omega=\sqcup_i\Omega_i$ satisfies the decomposition
\be\label{sm_decom}
    H^W=\sum_i H_i^W + \sum_{{i,j \atop (i\ne j)}}H^W_{ij},
\en
where the self helicities are
\be\label{self}
H^W_i=\frac{1}{2\pi}\int_0^h\int_{{S_z\cap\Omega_i\times \atop S_z\cap\Omega_i}}\frac{\d\theta(\xv,\yv)}{\d z}B_z(\xv)B_z(\yv)\,\d^2x\,\d^2y\,\d z,
\en
and the mutual helicities are
\be\label{mutual}
H^W_{ij}=\frac{1}{2\pi}\int_0^h\int_{{S_z\cap\Omega_i\times \atop S_z\cap\Omega_j}}\frac{\d\theta(\xv,\yv)}{\d z}B_z(\xv)B_z(\yv)\,\d^2x\,\d^2y\,\d z,
\en
and satisfy $H^W_{ij}=H^W_{ji}$.  In equation (\ref{self}), the position vectors $\xv$ and $\yv$ refer to different points in the same subdomain $S_z\cap\Omega_i$. In equation (\ref{mutual}), the position vectors $\xv$ refer to points in $S_z\cap\Omega_i$ and the position vectors $\yv$ refer to those in $S_z\cap\Omega_j$.
\end{theorem}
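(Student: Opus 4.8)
The plan is to read the decomposition straight off the field-line form (\ref{wind_hel2}), noting that the subdomain structure enters only through the \emph{domain} of the spatial double integral. Since $\Omega=\sqcup_i\Omega_i$, every horizontal slice splits as $S_z=\sqcup_i(S_z\cap\Omega_i)$, hence the product slice splits as $S_z\times S_z=\bigsqcup_{i,j}(S_z\cap\Omega_i)\times(S_z\cap\Omega_j)$ over ordered pairs $(i,j)$. Before using this I would record the (minor) point that the pointwise winding rate $\d\theta(\xv,\yv)/\d z$ is intrinsic to each piece: a field line meeting $S_z\cap\Omega_i$ stays in $\Omega_i$ as the height varies, because $\partial\Omega_i$ is a magnetic surface away from the horizontal planes, so the integrand restricted to $(S_z\cap\Omega_i)\times(S_z\cap\Omega_j)$ is simply the restriction of the global integrand in (\ref{wind_hel2}). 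Equivalently one writes $\Bv=\sum_i\Bv_i$ with $\Bv_i$ supported on $\Omega_i$, which is consistent because the $\Omega_i$ are pairwise disjoint.

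With that in place, additivity of the integral over the disjoint decomposition of $S_z\times S_z$ turns the inner double integral in (\ref{wind_hel2}) into a finite sum over pairs $(i,j)$; integrating in $z$ over $[0,h]$ and keeping the factor $1/(2\pi)$ gives $H^W=\sum_{i,j}C_{ij}$, where $C_{ij}$ is $(1/2\pi)\int_0^h\int(\d\theta/\d z)B_z(\xv)B_z(\yv)$ taken over $\xv\in S_z\cap\Omega_i$, $\yv\in S_z\cap\Omega_j$. By inspection $C_{ii}$ is precisely the self helicity $H^W_i$ of (\ref{self}) and, for $i\ne j$, $C_{ij}$ is precisely the mutual helicity $H^W_{ij}$ of (\ref{mutual}); separating the diagonal from the off-diagonal terms is then (\ref{sm_decom}). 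No new singularity appears, since for $i\ne j$ the points $\xv,\yv$ lie in disjoint subdomains, and for $i=j$ the situation is identical to the original integral over $\Omega_i$.

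For the symmetry $H^W_{ij}=H^W_{ji}$ I would check invariance of the integrand under $\xv\leftrightarrow\yv$: the factor $B_z(\xv)B_z(\yv)$ is manifestly symmetric, and because $\xv-\yv$ and $\yv-\xv$ are antiparallel, $\theta(\yv,\xv)$ differs from $\theta(\xv,\yv)$ by at most the constant $\pi$ (indeed the two coincide for the principal-branch expression (\ref{theta})), so $\d\theta/\d z$ is unchanged. Relabelling the dummy variables in (\ref{mutual}) therefore interchanges the roles of $\Omega_i$ and $\Omega_j$ without altering the value, which gives $H^W_{ij}=H^W_{ji}$.

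I do not expect a real obstacle: the result is, in essence, a partition-of-the-domain statement together with a one-line symmetry check. The single place that needs genuine care — and that I would state explicitly — is the first remark, namely that the winding rate $\d\theta(\xv,\yv)/\d z$ of a pair of local field-line portions is unaffected by the bookkeeping split into subdomains; this relies on the lateral boundaries being flux surfaces and on the slices $S_z\cap\Omega_i$ varying continuously in $z$ (away from the non-monotone turning points noted after (\ref{wind_gauge})). Finally, I would remark that running the same argument on the equivalent form (\ref{wind_hel}), using linearity of the winding gauge $\Av^W$ in $\Bv$, reproduces the decomposition expressed through the vector potentials $\Av^W(\Bv_i)$.
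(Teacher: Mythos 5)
Your proof is correct, but it takes a genuinely different route from the paper. You start from the already-established field-line form (\ref{wind_hel2}) for the total field and obtain (\ref{sm_decom}) purely by additivity of the integral over the partition $S_z\times S_z=\bigsqcup_{i,j}(S_z\cap\Omega_i)\times(S_z\cap\Omega_j)$, with the symmetry $H^W_{ij}=H^W_{ji}$ checked pointwise on the integrand ($B_z(\xv)B_z(\yv)$ symmetric, and $\theta(\yv,\xv)-\theta(\xv,\yv)$ constant so $\d\theta/\d z$ is unchanged). This works because $(\d\theta/\d z)\,B_z(\xv)B_z(\yv)$ is a local function of the field values at $\xv$ and $\yv$ at height $z$, so splitting the domain is mere additivity; as you note, the flux-surface condition on $\partial\Omega_i$ is needed only for the \emph{interpretation} of the diagonal and off-diagonal blocks as self and mutual winding of the respective subfields, not for the identity itself. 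The paper instead works at the level of the winding gauge: it proves that the bilinear form $\int_\Omega\Av^W(\Bv_1)\cdot\Bv_2\,\d V$ is symmetric and linear, writes $\Bv=\sum_i\Bv_i$ with $\Bv_i$ supported on $\Omega_i$, expands the quadratic form into diagonal and cross terms, and then invokes the Prior--Yeates derivation (restricted to the appropriate subdomains) to convert each term $\int\Av^W(\Bv_i)\cdot\Bv_j\,\d V$ into the winding integrals (\ref{self}) and (\ref{mutual}); the symmetry $H^W_{ij}=H^W_{ji}$ then falls out of the symmetry of $\Av^W$. Your approach buys economy — no re-derivation of the field-line formula per pair of subdomains is needed once (\ref{wind_hel2}) is granted for the whole domain — while the paper's approach makes the bilinear structure $H^W(\Bv_i,\Bv_j)$ explicit in the gauge-potential form, which is the structure reused for the space-time helicity-flux version in Corollary \ref{cor}; your closing remark about running the argument through the linearity of $\Av^W$ is essentially the paper's proof.
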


\begin{proof}
The proof follows from the winding gauge having similar properties to the Coulomb gauge (the {Biot-Savart} operator). We first show that $\Av^W$ is symmetric by considering two different fields $\Bv_1$ and $\Bv_2$. Now
\begin{eqnarray}\label{symmetric}
&&\int_{\Omega}\Av^W(\Bv_1)\cdot\Bv_2\,\d V \nonumber\\ 
&&=\frac{1}{2\pi}\int_0^h\int_{S_z\times S_z}\Bv_2(\xv)\cdot\frac{\Bv_1(\yv)\times(\xv-\yv)}{|\xv-\yv|^2}\,\d^2y\,\d^2x\,\d z \nonumber \\ 
&&= -\frac{1}{2\pi}\int_0^h\int_{S_z\times S_z}\Bv_1(\yv)\cdot\frac{\Bv_2(\xv)\times(\xv-\yv)}{|\xv-\yv|^2}\,\d^2y\,\d^2x\,\d z \nonumber \\ 
&&= \frac{1}{2\pi}\int_0^h\int_{\Omega}\Bv_1(\yv)\cdot\frac{\Bv_2(\xv)\times(\yv-\xv)}{|\yv-\xv|^2}\,\d^2x\,\d^2y\,\d z \nonumber \\ 
&&=\int_{\Omega}\Av^W(\Bv_2)\cdot\Bv_1\,\d V
\end{eqnarray}
From the linearity of integration and the above operations on vectors (scalar and vector products), it follows that $\Av^W$ is linear in the sense that
\be\label{linear}
\int_{\Omega}\Av^W(\Bv_1+\Bv_2)\cdot\Bv\,\d V = \int_{\Omega}\Av^W(\Bv_1)\cdot\Bv\,\d V + \int_{\Omega}\Av^W(\Bv_2)\cdot\Bv\,\d V.
\en
With properties (\ref{symmetric}) and (\ref{linear}), we can now decompose the winding helicity as
\be\label{decompose}
 \int_{\Omega}\Av^W\left(\sum_i\Bv_i\right)\cdot\sum_i\Bv_i\,\d V=
 \sum_i\int_{\Omega_i}\Av^W(\Bv_i)\cdot\Bv_i\,\d V 
 +\sum_{{i,j \atop (i\ne j)}}\int_{\Omega_i\cup\Omega_j}\Av^W(\Bv_i)\cdot\Bv_j\,\d V.
\en
The integrals in the first and second terms on the right-hand side of equation (\ref{decompose}) correspond to equations (\ref{self}) and (\ref{mutual}) respectively. The derivation of equations (\ref{self}) and (\ref{mutual}) now proceeds almost exactly as in \cite{10} and will be omitted for brevity. The only difference that we need to consider here is the domain of integration. For the self helicity, only the magnetic field of one subdomain $\Omega_i$ is needed and, when determining $\Av^W$, the intersection of this domain with the plane $S_z$ is required. For the mutual helicity, we require the fields in two subdomains $\Omega_i$ and $\Omega_j$ and their intersection with $S_z$ for calculating the winding gauge. Finally, from property (\ref{symmetric}), it follows immediately that $H^W_{ij}=H^W_{ji}$.
\end{proof}

\section{Relationship to previous work on relative helicity}
Previous works \cite{14,15,16,17,18} have developed self-mutual decomposition formulae, analogous to equation (\ref{sm_decom}), for thin discrete flux tubes. These formulae are highly appealing as they reduce complicated helicity calculations to simple problems of Euclidean geometry. Theorem \ref{sm_thrm} generalizes these previous results, allowing self and mutual helicities to be found in domains that are more complicated than discrete tubes. For example, equations (\ref{self}) and (\ref{mutual}) can be applied to subdomains that are multiply connected, i.e. contain holes. It will be shown below that previous results can be derived from our more general setup, which depends on magnetic winding \cite{13} as the underlying topological structure. This opens a more direct and coherent way of deriving previously known results, in addition to providing a practical approach to calculating self and mutual helicities in subdomains with complex geometries and topologies.

At the end of Section 1, we stated the relative helicity, in the form of equation (\ref{rel_hel}), does not possess a self-mutual decomposition like classical helicity. This statement appears to contradict previous works \cite{14,15,16,17,18}, for which self-mutual decomposition formulae have been found for relative helicity. However, our work and theirs can be unified by taking into account some basic considerations. {Berger \cite{14}} first derived the self and mutual helicity formulae for magnetic flux tubes between two planes. In this work, he assumes that the relative helicity can be written in the form of equation (\ref{wind_hel2}). Although this is formally correct, it comes with a caveat. From Lemma \ref{lemma}, if winding and relative helicities are to be equated then the winding helicity of the reference field $\Bv'$ must be zero. Thus, in the literature, when self and mutual helicities are reported with respect to relative helicity, this fact must be assumed implicitly.

\subsection{Thin tube approximations}

\subsubsection{Monotonically increasing in height}
As mentioned above, equations (\ref{self}) and (\ref{mutual}) reduce naturally to the self and mutual helicity formulae used to approximate the (relative) helicity of thin and discrete flux tubes \cite{18}. To illustrate this, we will first consider the case where magnetic flux tubes connecting the two planes are monotonically increasing in height \cite{14,15,18}. An illustration of this for two flux tubes is given in Figure \ref{monotonic}.
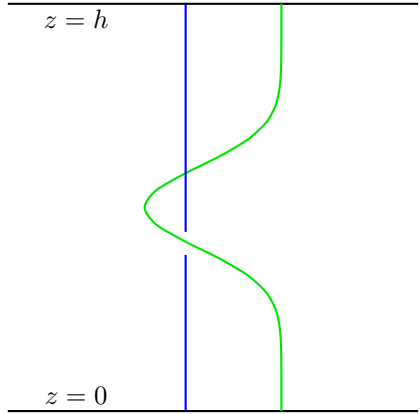
\begin{figure}
\centering
\begin{tikzpicture}[scale=0.9]
\draw[thick] (-3,3)--(3,3);
\draw[thick] (-3,-3)--(3,-3);
\draw[thick,smooth,black!10!green] plot[variable=\x,domain=-3:3] ({-2*exp(-1.4*\x*\x)+1},{\x});

\draw[thick,blue] (-0.4,-3)--(-0.4,-0.7);
\draw[thick,blue] (-0.4,-0.36)--(-0.4,3);

\node at (-2,-2.5) [anchor=north]{$z=0$};
\node at (-2,2.5) [anchor=south]{$z=h$};
\end{tikzpicture}
\caption{\label{monotonic}An illustration of two discrete flux tubes monotonically increasing in the $z$-direction between two horizontal planes. Each tube has internal field line structure leading to self helicity. The winding of the flux tubes around each other leads to mutual helicity.}
\end{figure}

Suppose that we have a thin flux tube $\Omega_i$ for which $\d\theta/\d z$ only varies in height, i.e. at a particular height $z$, the pairwise winding for all pairs of field lines is the same. The self helicity (\ref{self}) then reduces to
\begin{eqnarray}\label{tube_self}
H^W_i&=&\frac{1}{2\pi}\int_0^h\int_{{S_z\cap\Omega_i\times \atop S_z\cap\Omega_i}}\frac{\d\theta(\xv,\yv)}{\d z}B_z(\xv)B_z(\yv)\,\d^2x\,\d^2y\,\d z, \nonumber \\
&=& \frac{1}{2\pi}\int_0^h\frac{\d\theta}{\d z}\,\d z \int_{{S_z\cap\Omega_i\times \atop S_z\cap\Omega_i}}B_z(\xv) B_z(\yv)\,\d^2x\,\d^2y, \nonumber \\
&=& \mathcal{L}_{\rm self}\Phi_i^2,
\end{eqnarray} 
where 
\be\label{l_self}
\mathcal{L}_{\rm self} = \frac{1}{2\pi}\int_0^h\frac{\d\theta}{\d z}\,\d z
\en
is the winding of any pair of field lines in $\Omega_i$ and
\be
\Phi_i = \int_{S_z\cap\Omega_i}B_z(\xv)\,\d^2x,
\en
is the magnetic flux in the tube. The winding $\mathcal{L}_{\rm self}$ depends on the internal twisting of field lines and the writhe of the tube.

If another thin flux tube $\Omega_j$ is present in $\Omega$ and obeys the same assumptions as $\Omega_i$, then the mutual helicity becomes
\begin{eqnarray}\label{tube_mutual}
H^W_{ij}&=&\frac{1}{2\pi}\int_0^h\int_{{S_z\cap\Omega_i\times \atop S_z\cap\Omega_j}}\frac{\d\theta(\xv,\yv)}{\d z}B_z(\xv)B_z(\yv)\,\d^2x\,\d^2y\,\d z, \nonumber \\
&=& \frac{1}{2\pi}\int_0^h\frac{\d\theta}{\d z}\,\d z \int_{S_z\cap\Omega_i}B_z(\xv)\,\d^2x \int_{S_z\cap\Omega_j}B_z(\yv)\,\d^2y, \nonumber \\
&=& \mathcal{L}_{\rm mutual}\Phi_i\Phi_j.
\end{eqnarray}
Assuming that $\Omega_i$ and $\Omega_j$ are suitably thin in order to treat them as strings rather than tubes, the mutual pairwise winding can be written as 
\be
\mathcal{L}_{\rm mutual} = \frac{1}{2\pi}[\theta(z=h)-\theta(z=0)] + N,
\en
where $N$ is the full number of rotations of $\Omega_i$ about $\Omega_j$. Equations (\ref{tube_self}) and (\ref{tube_mutual}) can be found, for example, in \cite{18}.

\subsubsection{Both ends connected to the same plane}
For the case of non-monotonically increasing flux tubes, such as those connected to the same plane, a similar approach can be adopted to that described above. Now, however, the concept of winding has to be generalized in order to account for magnetic field lines that are not monotonically increasing in $z$. Each field line is split into monotonically increasing (decreasing) segments and these are assigned positive (negative) weights to account for their directions. Thus a generalized version of the pairwise winding, first presented by {Berger and Prior \cite{20}}, can be defined as follows. 
\begin{definition}
Let two field lines $\xv$ and $\yv$ have $n$ and $m$ distinct turning points respectively, that is points where $\d x_z/\d z = 0$ with $\xv\cdot\ev_z=x_z$ or  $\d y_z/\d z = 0$ with $\yv\cdot\ev_z = y_z$. Let $\xv$ be partitioned into $n + 1$ regions and $\yv$ into $m + 1$ regions so that curve sections $\xv_i$ and $\yv_j$ share a mutual z-range $[z_{ij}^{\min}, z_{ij}^{\max}]$ in each region. The total winding is defined as the sum of the weighted pairwise winding in each region,
\be\label{wind2}
\cl(\xv,\yv) = \sum_{i=1}^{n+1}\sum_{j=1}^{m+1}\frac{\sigma(\xv_i)\sigma(\xv_j)}{2\pi}\int_{z_{ij}^{\rm min}}^{z_{ij}^{\rm max}} \frac{\d}{\d z}\theta(\xv_i(z),\yv_j(z))\,\d z,
\en
where $\sigma(\xv_i)$ is an indicator function marking where the curve section $\xv_i$ moves up or down in $z$, i.e.
\be\label{sigma}
\sigma(\xv_i) = \left\{\begin{array}{ccc}
1 & {\rm if} \quad & \d x_z/\d z > 0, \\
-1 & {\rm if} \quad &\d x_z /\d z < 0, \\
0 & {\rm if} \quad &\d x_z /\d z = 0. \end{array}\right.
\en
\end{definition}
\noindent It is the generalized form of winding given in equation (\ref{wind2}) that acts as the underlying topological structure of winding helicity  \cite{10,13}. This quantity, of course, reduces to the standard winding formula (e.g. equation \ref{l_self}) when the field lines are monotonically increasing in $z$.

We now consider a specific example to indicate how the thin tube approximation of mutual helicity can be derived for two flux tubes anchored at the same planar boundary. We present the derivation in two ways in order to show more clearly the winding interpretation connected to previously derived helicity formulae. In our first approach, we will calculate the winding of one flux tube relative to the footpoints of the other, which requires a simpified geometric construction. In the second approach, which is more general, we will calculate the full pairwise winding of both flux tubes. The purpose of employing these two approaches is to show that the mutual helicity can have two alternative geometric interpretations.

We begin by outlining the basic geometry required for both calculations. Figure \ref{two_arches} displays two semi-circular flux tubes connecting to the same horizontal plane.
\begin{figure}[h]
\centering
\begin{tikzpicture}[scale=0.9]
\draw[thick] (-5,0) --(5,0);

\draw[thick,blue,->] ([shift=(180:3cm)]-1,0) arc (180:160:2.5cm);
\draw[thick,blue] ([shift=(0:3cm)]-2,0) arc (0:180:2.5cm);
\draw[thick,black!20!green] ([shift=(0:3cm)]1.5,0) arc (0:128:3cm);
\draw[thick,black!20!green] ([shift=(180:3cm)]1.5,0) arc (180:133:3cm);
\draw[thick,black!20!green,->] ([shift=(180:3cm)]1.5,0) arc (180:160:3cm);

\fill (-1.5,2.5)  circle[radius=2pt];

\fill (1.5,3)  circle[radius=2pt];

\fill[red] (-0.15,2.5)  circle[radius=2pt];

\fill[red] (3.15,2.5)  circle[radius=2pt];
\draw[thick,dashed,red] (-0.15,2.5)--(3.15,2.5);

\draw[thick, dashed,purple] (-4,0)--(-4,2.5);

\fill[purple] (-4,2.5)  circle[radius=2pt];
\node[purple] at (-4,3.1) [anchor=north]{$h_1$};

\node[purple] at (-4.45,1.6) [anchor=north]{$\alpha_1'$};

\node at (-1.5,3.1) [anchor=north]{$h_1$};
\node at (1.5,3.6) [anchor=north]{$h_2$};

\node[blue] at (-3.2,1.5) [anchor=north]{$\alpha_1$};
\node[blue] at (1,1.5) [anchor=north]{$\alpha_2$};

\node[black!20!green] at (-1.6,1.5) [anchor=north]{$\beta_1$};
\node[black!20!green] at (3.9,1.5) [anchor=north]{$\beta_2$};

\node[red] at (1.5,2) [anchor=south]{$h_1$};
\end{tikzpicture}
\caption{\label{two_arches}An illustration of two discrete flux tubes $\alpha = \alpha_1\cup\alpha_2$ and $\beta = \beta_1\cup\beta_2$ connected to the same plane. These tubes have maximum heights of $h_1$ and $h_2$ respectively. The vertical line $\alpha_1'$ is used for the simplified calculation of winding, relative to a footpoint, instead of $\alpha_1$. {A vertical projection of the flux tubes, as seen from above, is shown in Figures \ref{angles} and \ref{pw_angles}.}}
\end{figure}
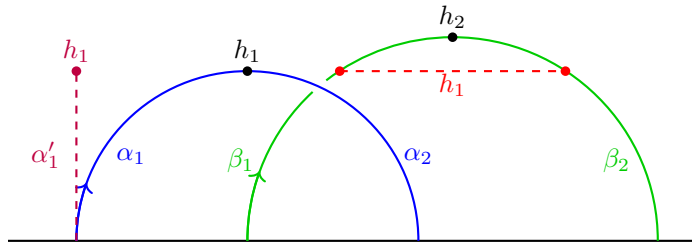
The turning points of the flux tubes are labelled with their heights $h_1$ and $h_2$. These points divide the tubes into monotonically increasing/decreasing regions. The blue tube (with flux $\Phi_i$) is split into parts, $\alpha_1$ and $\alpha_2$ on either side of the turning point. Similarly, the green tube (with flux $\Phi_j$) is split into $\beta_1$ and $\beta_2$. Since $h_2>h_1$, the part of the green tube above $h_2$  will not contribute to the winding and the helicity (it is effectively weighted by zero magnetic flux). The upper boundary of where the green tube contributes to the pairwise winding is indicated by the red dashed line at height $h_1$.

Figure \ref{two_arches} also displays a vertical line $\alpha_1'$, of height $h_1$, anchored at the footpoint of $\alpha_1$. In the simplified calculation of winding, $\alpha'_1$ will be used instead of $\alpha_1$ (and, similarly, $\alpha_2'$, which is not shown, instead of $\alpha_2$). Each vertical line inherits the same magnetic field direction as its corresponding section of the flux tube.

With the basic geometry in place, we can now proceed to calculating the winding. Using equation (\ref{wind2}), this is done by comparing the winding of different sections of curves. We first focus on the simplified winding case and consider the pairings:
 $(\alpha_1',\beta_1)$, $(\alpha_1',\beta_2)$, $(\alpha_2',\beta_1)$ and $(\alpha_2',\beta_2)$. Figure \ref{angles}(a) displays the winding angles for each of the above pairings.

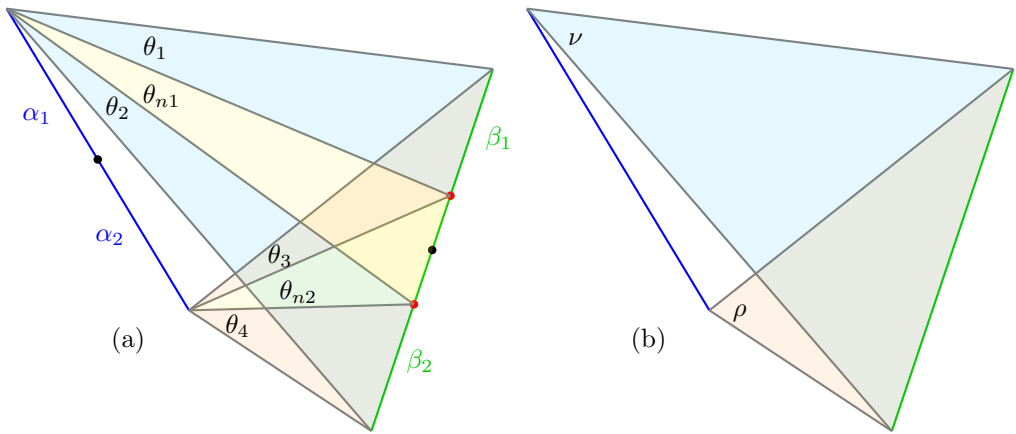
\begin{figure}[h]
\centering
{\begin{tikzpicture}[scale=0.8]
\draw[thick,blue] (-3,3)--(0,-2);
\draw[thick,black!20!green] (3,-4)--(5,2);

\fill (-1.5,0.5)  circle[radius=2pt];
\fill (4,-1)  circle[radius=2pt];

\fill[red] (3.7,-1.9)  circle[radius=2pt];
\fill[red] (4.3,-0.1)  circle[radius=2pt];

\draw[thick,gray] (-3,3)--(3,-4);
\draw[thick,gray] (-3,3)--(3.7,-1.9);

\draw[thick,gray] (-3,3)--(4.3,-0.1);
\draw[thick,gray] (-3,3)--(5,2);

\draw[thick,gray] (0,-2)--(5,2);
\draw[thick,gray] (0,-2)--(4.3,-0.1);

\draw[thick,gray] (0,-2)--(3.7,-1.9);
\draw[thick,gray] (0,-2)--(3,-4);

\fill[cyan,fill opacity=0.1] (-3,3)--(3,-4)--(3.7,-1.9); 
\fill[cyan,fill opacity=0.1] (-3,3)--(4.3,-0.1)--(5,2); 

\fill[orange,fill opacity=0.1] (0,-2)--(3,-4)--(3.7,-1.9); 
\fill[orange,fill opacity=0.1] (0,-2)--(5,2)--(4.3,-0.1); 

\fill[yellow,fill opacity=0.1] (0,-2)--(4.3,-0.1)--(3.7,-1.9); 
\fill[yellow,fill opacity=0.1] (-3,3)--(4.3,-0.1)--(3.7,-1.9);

\node[blue] at (-2.5,1.5) [anchor=north]{$\alpha_1$};
\node[blue] at (-1.3,-0.5) [anchor=north]{$\alpha_2$};

\node[black!20!green] at (5.1,1.2) [anchor=north]{$\beta_1$};
\node[black!20!green] at (3.8,-2.5) [anchor=north]{$\beta_2$};

\node at (0.8,-1.91) [anchor=north]{$\theta_4$};
\node at (1.5,-0.76) [anchor=north]{$\theta_3$};
\node at (1.8,-1.35) [anchor=north]{$\theta_{n2}$};

\node at (-0.55,2.7) [anchor=north]{$\theta_1$};
\node at (-1.17,1.7) [anchor=north]{$\theta_2$};
\node at (-0.45,1.9) [anchor=north]{$\theta_{n1}$};

\node at (-1,-2.5){(a)};

\end{tikzpicture}}\,{\begin{tikzpicture}[scale=0.8]
\draw[thick,blue] (-3,3)--(0,-2);
\draw[thick,black!20!green] (3,-4)--(5,2);

%

\draw[thick,gray] (-3,3)--(3,-4);
\draw[thick,gray] (0,-2)--(5,2);
\draw[thick,gray] (-3,3)--(5,2);
\draw[thick,gray] (0,-2)--(3,-4);

\fill[cyan,fill opacity=0.1] (-3,3)--(3,-4)--(5,2); 
\fill[orange,fill opacity=0.1] (0,-2)--(3,-4)--(5,2); 

\node at (-2.2,2.75) [anchor=north]{$\nu$};
\node at (0.5,-1.75) [anchor=north]{$\rho$};
\node at (-1,-2.5){(b)};
\end{tikzpicture}}
\caption{\label{angles}The winding angles of $\beta$ relative to the footpoints of $\alpha$. (a) shows the individual angles described in the text. (b) shows the sum of angles, as described in other works such as \cite{14,18}.}
\end{figure}
For the pairing $(\alpha_1',\beta_1)$, we have $\sigma(\alpha_1')=\sigma(\beta_1)=1$. The magnitude of the winding angle is $\theta_1$ but since it is found by moving clockwise, the actual winding angle is $-\theta_1$. Therefore, the winding of this part is $\cl(\alpha_1',\beta_1) = -\theta_1/(2\pi)$.  For the pairing $(\alpha_1',\beta_2)$, we now have $\sigma(\alpha_1')=1$ and $\sigma(\beta_2)=-1$. Also, the magnitude of winding angle $\theta_2$ is found by moving in the anti-clockwise direction and so the angle is equivalent to its magnitude. Hence, the winding for this part is $\cl(\alpha_1',\beta_2) = -\theta_2/(2\pi)$. 

There is a gap between the regions of  $\cl(\alpha_1',\beta_1)$ and $\cl(\alpha_1',\beta_2)$, indicated in Figure \ref{angles} by an angle of magnitude $\theta_{n1}$. Some care needs to be taken when integrating up to the point at height $h_2$ as the angle $\theta$ undergoes a discontinuous jump at this point. There is a contribution to the winding moving from $\alpha_1'$ to $\beta_1$ along the red dashed line, where $\theta$ jumps in value. This contribution is
\be
\cl({\rm jump}) = -\frac{1}{2\pi}\int_{\alpha_1'}^{\beta_1}\frac{\d}{\d z}\theta\d z = -\frac{1}{2\pi}[\theta]_{\alpha_1'}^{\beta_1} = -\frac{1}{2\pi}\theta_{n1},
\en
where the negative sign takes account of the $\sigma$-weighting and the direction of the angle, and the derivative is taken in the sense of distributions. Thus the total winding of $\beta=\beta_1\cup\{z=h_1\}\cup\beta_2$ about $\alpha_1'$ is
\be\label{nu}
\cl(\alpha_1',\beta) = -\frac{1}{2\pi}(\theta_1 + \theta_{n1} + \theta_2) = -\frac{1}{2\pi}\nu,
\en
where $\nu=\theta_1 + \theta_{n1} + \theta_2$, as shown in Figure \ref{angles}(b). By applying similar reasoning, we find 
\be\label{rho}
\cl(\alpha_2',\beta) = \frac{1}{2\pi}(\theta_3 + \theta_{n2} + \theta_4)=\frac{1}{2\pi}\rho,
\en
where $\rho=\theta_3 + \theta_{n2} + \theta_4$. The mutual helicity is found by adding these windings together and multiplying by the fluxes (assigning $\alpha'$ the flux $\Phi_i$ as it represents the footpoint of $\alpha$),
\be\label{wright_berger}
H^W_{ij} = \frac{1}{2\pi}(\rho-\nu)\Phi_i\Phi_j.
\en
This is the thin-tube formula for the mutual helicity of two non-crossing tubes that can be found in several works \cite{14,18}. It is straightforward to check that $H_{ji}=H_{ij}$ by perfoming the above procedure but starting from $\beta$.  The calculation is simpler than the one above since the boundary point between $\alpha_1$ and $\alpha_2$ does not require the consideration of a `winding jump'. Simple Euclidean geometry then shows that the two mutual helicities are equivalent (as described in \cite{14}). This approach shows that the geometrical meaning of the mutual helicity in equation (\ref{wright_berger}) can be interpreted as the winding of one of the flux tubes about the footpoints of the other. 

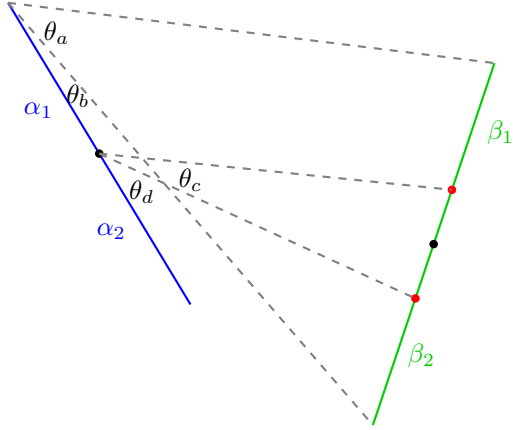
\begin{figure}[h]
\centering{
\begin{tikzpicture}[scale=0.8]
\draw[thick,blue] (-3,3)--(0,-2);
\draw[thick,black!20!green] (3,-4)--(5,2);

\fill (-1.5,0.5)  circle[radius=2pt];
\fill (4,-1)  circle[radius=2pt];

\fill[red] (3.7,-1.9)  circle[radius=2pt];
\fill[red] (4.3,-0.1)  circle[radius=2pt];

\draw[thick,gray, dashed] (-3,3)--(3,-4);
\draw[thick,gray,dashed] (-3,3)--(5,2);
\draw[thick,gray,dashed] (-1.5,0.5)--(3.7,-1.9);
\draw[thick,gray,dashed] (-1.5,0.5)--(4.3,-0.1);

\node at (-2.2,2.85) [anchor=north]{$\theta_a$};
\node at (-1.85,1.8) [anchor=north]{$\theta_b$};
\node at (0,0.4) [anchor=north]{$\theta_c$};
\node at (-0.8,0.2) [anchor=north]{$\theta_d$};

\node[blue] at (-2.5,1.5) [anchor=north]{$\alpha_1$};
\node[blue] at (-1.3,-0.5) [anchor=north]{$\alpha_2$};

\node[black!20!green] at (5.1,1.2) [anchor=north]{$\beta_1$};
\node[black!20!green] at (3.8,-2.5) [anchor=north]{$\beta_2$};

\end{tikzpicture}}
\caption{\label{pw_angles}The winding angles of $\beta$ relative to $\alpha_1$ for the pairwise winding calculation. All angles shown are between dashed lines emanating directly from $\alpha_1$ or between a dashed line and $\alpha$.}
\end{figure}

Proceeding to our alternative approach, Figure \ref{pw_angles} displays the winding angles for the full pairwise winding calculation $\cl(\alpha_1,\beta)$, using equation (\ref{wind2}). It is straightforward to identify, taking into account the direction of the changing angle and the $\sigma$-weightings of the curve sections, that $\cl(\alpha_1,\beta_1) = [(\theta_c+\theta_d)-(\theta_a+\theta_b)]/(2\pi)$, $\cl({\rm jump})=-\theta_c/(2\pi)$ and $\cl(\alpha_1,\beta_2) = (\theta_b-\theta_d)/(2\pi)$. Summing these terms, we find that
\be
\cl(\alpha_1,\beta) = -\frac{1}{2\pi}\theta_a=-\frac{1}{2\pi}\nu.
\en
We thus find the same result as equation (\ref{nu}). Performing a similar calculation for $\cl(\alpha_2,\beta)$ reproduces equation (\ref{rho}). Therefore, equation (\ref{wright_berger}) can be interpreted as depending on the winding of one tube about the footpoints of the other, as we showed first, or in terms of the more general description of the pairwise winding of the two tubes. 

\section{Self and mutual helicity flux}
The self and mutual helicities that we have considered until now are determined in a three-dimensional spatial domain (containing a magnetic field) at a particular instant in time. However,  the concept of mutual and self helicities can also be extended {to the temporal winding of field lines, an area that has received attention in other areas of fluid dynamics \cite{25}. In particular, the mutual/self decomposition can be extended}  to treat magnetic helicity flux through a planar boundary - a quantity calculated in a domain with two spatial dimensions and one temporal. The time-integrated flux of relative helicity through a planar boundary $P$ (see, for example, \cite{11,12,21}) is given by  
\be\label{hel_flux}
H_R^P = -\frac{1}{2\pi}\int_0^T\int_{P\times P}\frac{\d\theta(\xv,\yv)}{\d t}B_z(\xv)B_z(\yv)\,\d^2x\,\d^2y\,\d t,
\en
where the integration takes place in the time period $t\in[0,T]$. There is a clear similarity between equations (\ref{hel_flux}) and (\ref{wind_hel2}), with one converting to the other \emph{via} the transformation $[0,h]\leftrightarrow[T,0]$. The relative helicity through the plane $H_R^P$ can also be written as the winding helicity of a particular vector in $\Omega^t$, which is the same domain as $\Omega$ but with the vertical dimension replaced by time, i.e. a domain built from stacking planes $P$ at different times for $t\in[0,T]$ rather than stacking slices $S_z$ for $z\in[0,h]$. Vectors in this domain have the basis $\{\ev_x,\ev_y,\ev_t\}$, where the last unit vector represents the `time dimension'. {The topological interpretation of magnetic helicity flux is important to consider as often it is this quantity that can be determined in practice rather that the full three-dimensional helicity, such as in solar observations  \cite{21}.}  
\begin{theorem}
The winding helicity of the divergence-free vector $\Cv = \Ev\times\ev_t + B_z\ev_t$ in $\Omega^t$ is equal to minus the time-integrated helicity flux through a horizontal plane $P$, where $\Ev$ is the electric field of ideal MHD and $B_z=\Bv\cdot\ev_z$ (both measured on $P$).
\end{theorem}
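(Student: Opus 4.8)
The plan is to write $\Cv$ in coordinates on $\Omega^t$, verify via Faraday's law that it is divergence-free, and then apply --- now in $\Omega^t$ in place of $\Omega$ --- the identity of Prior and Yeates \cite{10} that rewrites (\ref{wind_hel}) as (\ref{wind_hel2}), whereupon the theorem drops out on comparison with (\ref{hel_flux}). First I would equip $\Omega^t$ with coordinates $(x,y,t)$ and a right-handed basis $\{\ev_x,\ev_y,\ev_t\}$, so that $\ev_t$ plays the role $\ev_z$ plays in $\Omega$ and the planes $P$ at fixed $t$ play the role of the slices $S_z$. Expanding the cross product gives
\be
\Cv = \Ev\times\ev_t + B_z\,\ev_t = E_y\,\ev_x - E_x\,\ev_y + B_z\,\ev_t ,
\en
so the $\ev_t$-component of $\Cv$ is precisely $B_z$ --- this is the purpose of the term $B_z\ev_t$ --- while its horizontal part is $\Cv_\perp=(E_y,-E_x)$. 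Taking the divergence in $\Omega^t$ and inserting Faraday's law $\nabla\times\Ev=-\pt_t\Bv$,
\be
\nabla\cdot\Cv = \pt_x E_y - \pt_y E_x + \pt_t B_z = (\nabla\times\Ev)\cdot\ev_z + \pt_t B_z = -\pt_t B_z + \pt_t B_z = 0 ,
\en
so $\Cv$ is admissible for the winding-helicity construction: it is divergence-free, it is permitted a non-zero component through the ``horizontal'' faces $t=0$ and $t=T$, and (with $\Bv$ and $\Ev$ confined to, or decaying on, $P$) it meets the required condition on any side boundary of $\Omega^t$.

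Next I would apply the winding-helicity identity in $\Omega^t$, substituting $z\mapsto t$, $S_z\mapsto P$ and taking the ``vertical'' component of the field to be $\Cv\cdot\ev_t=B_z$:
\be
H^W(\Cv)=\int_{\Omega^t}\Av^W(\Cv)\cdot\Cv\,\d V=\frac{1}{2\pi}\int_0^T\int_{P\times P}\frac{\d\tilde\theta(\xv,\yv)}{\d t}\,B_z(\xv)B_z(\yv)\,\d^2x\,\d^2y\,\d t ,
\en
where $\tilde\theta$ is the polar angle of $\xv-\yv$ tracked along the integral curves of $\Cv$ in $\Omega^t$ (parametrized by $t$), with the $\sigma$-weighted generalization (\ref{wind2})--(\ref{sigma}) understood wherever $B_z$ changes sign on $P$. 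An integral curve of $\Cv$ parametrized by $t$ obeys $\d\xv/\d t=\Cv_\perp/(\Cv\cdot\ev_t)=(E_y,-E_x)/B_z$, which is exactly the apparent footpoint trajectory on $P$: in ideal MHD the electric field on $P$ can be written $\Ev_\perp=-\wv_\perp\times(B_z\ev_z)$ with $\wv_\perp$ the flux-transport velocity, so $\wv_\perp=(E_y,-E_x)/B_z$, and this is precisely the motion whose pairwise winding is measured by (\ref{hel_flux}). Hence $\tilde\theta$ coincides with the angle $\theta$ of (\ref{hel_flux}), and on comparing the two integrals one obtains $H^W(\Cv)=-H_R^P$, the minus sign being the explicit leading sign in the definition (\ref{hel_flux}), since the winding-helicity formula applied in $\Omega^t$ with its natural orientation $t:0\to T$ carries no such sign.

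I expect the main obstacle to be the sign bookkeeping in this last step: one must keep consistent track of (i) the orientation of $\{\ev_x,\ev_y,\ev_t\}$, which fixes the sign of $\Ev\times\ev_t$; (ii) the sign in Ohm's law, and hence in the definition of $\wv_\perp$; and (iii) the leading sign in (\ref{hel_flux}); and confirm that they combine to a single overall minus sign. One should also check that the $\sigma$-weighted winding of (\ref{wind2}) is the right object when $B_z$ is not single-signed on $P$, and that no boundary contributions appear on $\pt\Omega^t$. Everything else is the formal substitution $z\leftrightarrow t$, $S_z\leftrightarrow P$, $B_z\leftrightarrow(\Cv\cdot\ev_t)$ into results already established in the paper and in \cite{10}.
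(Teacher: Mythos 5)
Your proposal is correct and follows essentially the same route as the paper's proof: divergence-freeness via Faraday's law, the identification of the horizontal drift $\Cv^P/C_t=(E_y,-E_x)/B_z$ with the flux-transport footpoint velocity of ideal MHD, and the Prior--Yeates winding identity applied in $\Omega^t$ with weight $C_t=B_z$, compared against the leading minus sign in (\ref{hel_flux}). The only (cosmetic) difference is direction: the paper computes $\d\theta/\d t$ along footpoint trajectories and matches components to deduce the form of $\Cv$, whereas you start from the given $\Cv$ and identify its integral curves with those trajectories.
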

\begin{proof}
We first show that $\Cv$ is divergence-free. In $\Omega^t$, the divergence operator can be written as
\be
{\rm div} = \frac{\partial}{\partial x} + \frac{\partial}{\partial y} + \frac{\partial}{\partial t}.
\en
Then, making use of Faraday's law,
\begin{eqnarray}
{\rm div}\,\Cv &=& \frac{\partial E_y}{\partial x} - \frac{\partial E_x}{\partial y} + \frac{\partial B_z}{\partial t} \nonumber \\
&=& \left(\nabla\times\Ev + \frac{\partial \Bv}{\partial t}\right)_z = 0.
\end{eqnarray}
We now show that $\Cv = \Ev\times\ev_t + B_z\ev_t$. After differentiating $\theta(\xv,\yv)$ in equation (\ref{theta}) with respect to $t$ and performing some simple manipulations, it can be shown that
\be\label{dtdt1}
\frac{\d}{\d t}\theta(\xv(t),\yv(t)) = \frac{1}{|\rv|^2}\left(r_1\frac{\d r_2}{\d t} - r_2\frac{\d r_1}{\d t}\right),
\en
where $\rv=\xv-\yv$ with $r_1=x_1-y_1$ and $r_2 = x_2-y_2$. As the magnetic field emerges through or moves laterally on $P$, the evolution of a vector $\xv$ to a point where a field line intersects $P$ is given by
\be\label{fieldline}
\frac{\d\xv}{\d t} = \uv^P-\frac{u_z}{B_z}\Bv^P,
\en
where $\uv$ is the velocity field and $\uv^P$ and $\Bv^P$ are the projections onto $P$ of the velocity and magnetic fields respectively \cite{21}. Inserting equation (\ref{fieldline}) into equation (\ref{dtdt1}) leads to
\be\label{c1}
\frac{\d}{\d t}\theta(\xv(t),\yv(t))= \frac{1}{|\rv|^2}\left[r_1\left(\frac{E_y(\xv)}{B_z(\xv)}-\frac{E_y(\yv)}{B_z(\yv)}\right)-r_2\left(\frac{E_x(\xv)}{B_z(\xv)}-\frac{E_x(\yv)}{B_z(\yv)}\right)\right],
\en
where $\Ev=-\uv\times\Bv$.

Writing the vector $\Cv$ in terms of projections parallel to $P$, $\Cv^P$, and perpendicular to $P$, $C_t\ev_t$, we follow the procedure in \cite{10} to find
\be\label{c2}
\Av^W[\Bv(\xv)]\cdot\Bv(\yv) = \frac{1}{2\pi}\int_P\left(\frac{\Cv^P(\xv)}{C_t(\xv)}-\frac{\Cv^P(\yv)}{C_t(\yv)}\right)\cdot\frac{\ev_t\times\rv}{|\rv|^2}C_t(\xv)C_t(\yv)\,\d^2y.
\en 
Since $\ev_t\times\rv = (-r_2,r_1,0)^T$, a comparison of equations (\ref{c1}) and (\ref{c2}) reveals that the components of $\Cv$ are $C_x = E_y$, $C_y=E_x$ and $C_t=B_z$. In other words,
\be
\Cv = \Ev\times\ev_t+B_z\ev_t
\en
and
\be
\int_{\Omega^t}\Av^W(\Cv)\cdot\Cv\,\d V = -H^P_R.
\en
\end{proof}
  
\begin{corollary}\label{cor}
The time-integrated helicity flux has a self-mutual decomposition for $\Omega^t=\sqcup_i\Omega^t_i$,
\be
    H_R^P=\sum_i H_{Ri}^P + \sum_{{i,j \atop (i\ne j)}}H^P_{Rij},
\en
where the self helicities are
\be\label{self_time}
H^P_{Ri}=-\frac{1}{2\pi}\int_0^T\int_{{P\cap\Omega^t_i\times \atop P\cap\Omega^t_i}}\frac{\d}{\d t}\theta(\xv,\yv)B_z(\xv)B_z(\yv)\,\d^2x\,\d^2y\,\d t,
\en
and the mutual helicities are
\be\label{mutual_time}
H^P_{Rij}=-\frac{1}{2\pi}\int_0^T\int_{{P\cap\Omega^t_i\times \atop P\cap\Omega^t_j}}\frac{\d}{\d t}\theta(\xv,\yv)B_z(\xv)B_z(\yv)\,\d^2x\,\d^2y\,\d t,
\en
and satisfy $H^P_{Rij}=H^P_{Rji}$.
\end{corollary}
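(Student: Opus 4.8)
The plan is to obtain Corollary \ref{cor} as a direct specialization of Theorem \ref{sm_thrm}, applied not to a magnetic field in $\Omega$ but to the divergence-free vector $\Cv = \Ev\times\ev_t + B_z\ev_t$ in the space-time domain $\Omega^t$. The preceding theorem already supplies the two facts that make this work: $\Cv$ is divergence-free in $\Omega^t$, and $\int_{\Omega^t}\Av^W(\Cv)\cdot\Cv\,\d V = -H_R^P$. Since $\Omega^t$ is, by construction, a stack of planes $P$ indexed by $t\in[0,T]$ — structurally identical to the stack of slices $S_z$ indexed by $z\in[0,h]$ on which all of Section \ref{sec:self_mut} rests — every ingredient of Theorem \ref{sm_thrm} (the symmetry and linearity of $\Av^W$, and the reduction to the $\d\theta/\d z$ form) transfers verbatim under the replacement $z\leftrightarrow t$, $S_z\leftrightarrow P$, $\Bv\leftrightarrow\Cv$, $B_z\leftrightarrow C_t$, the only input being divergence-freeness and the boundary behaviour of the field, not its being a physical magnetic field.

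First I would record that, writing $\Omega^t=\sqcup_i\Omega^t_i$ and $\Cv=\sum_i\Cv_i$ with $\Cv_i=\Cv|_{\Omega^t_i}$, each $\Cv_i$ is divergence-free on $\Omega^t_i$ and satisfies $\Cv_i\cdot\nv=0$ on the lateral (non-$P$) part of $\partial\Omega^t_i$ — exactly the hypotheses needed in Theorem \ref{sm_thrm}. The lateral boundary condition holds because the lateral boundary of $\Omega^t_i$ is swept out by the space-time trajectories $t\mapsto(\xv(t),t)$ obeying equation (\ref{fieldline}); as the derivation of equations (\ref{c1})--(\ref{c2}) shows, these trajectories have tangent parallel to $(\Cv^P/C_t,1)$, i.e. they are the integral curves of $\Cv$, so $\Omega^t_i$ is a genuine flux subtube of $\Cv$ in the sense of Section \ref{sec:self_mut} (and, as there, field lines of $\Cv$ may be non-monotonic in $t$, meeting a given $P$ several times, which is precisely what the generalized winding of equation (\ref{wind2}) already accommodates). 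This verification is the only step requiring real care; the remainder is transcription.

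Next I would invoke Theorem \ref{sm_thrm} for $\Cv$ in $\Omega^t$, obtaining
\be
\int_{\Omega^t}\Av^W(\Cv)\cdot\Cv\,\d V = \sum_i H^W_i(\Cv_i) + \sum_{{i,j\atop(i\ne j)}}H^W_{ij}(\Cv_i,\Cv_j),
\en
with the self and mutual terms given by equations (\ref{self}) and (\ref{mutual}) under the substitutions above; in particular each integrand is $\frac{\d\theta(\xv,\yv)}{\d t}\,C_t(\xv)C_t(\yv) = \frac{\d\theta(\xv,\yv)}{\d t}\,B_z(\xv)B_z(\yv)$ because $C_t=B_z$. Combining this with $\int_{\Omega^t}\Av^W(\Cv)\cdot\Cv\,\d V = -H_R^P$ and setting $H^P_{Ri}:=-H^W_i(\Cv_i)$ and $H^P_{Rij}:=-H^W_{ij}(\Cv_i,\Cv_j)$ yields the stated decomposition together with formulas (\ref{self_time}) and (\ref{mutual_time}), while the symmetry $H^P_{Rij}=H^P_{Rji}$ is inherited directly from $H^W_{ij}=H^W_{ji}$ in Theorem \ref{sm_thrm}.
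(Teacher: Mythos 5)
Your proposal is correct and follows essentially the same route as the paper, which omits the proof of Corollary \ref{cor} precisely because it amounts to re-running the argument of Theorem \ref{sm_thrm} for the divergence-free vector $\Cv$ in $\Omega^t$ (under $z\leftrightarrow t$, $S_z\leftrightarrow P$, $C_t=B_z$) and then invoking $\int_{\Omega^t}\Av^W(\Cv)\cdot\Cv\,\d V = -H_R^P$. Your explicit check that each $\Omega^t_i$ has a lateral boundary tangent to $\Cv$ (so the subdomains are genuine flux regions of $\Cv$) is a welcome piece of added care that the paper leaves implicit.
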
  
The proof of Corollary \ref{cor} is very similar to that of Theorem \ref{sm_thrm} and is omitted for brevity. Similar to Theorem \ref{sm_thrm}, the position vectors $\xv$ and $\yv$ point to locations in the same subdomain for self helicity, but in different subdomains for mutual helicity. As indicated by the existence of a self-mutual decomposition, there is a deep connection between winding and helicity flux. This connection is explored in detail in \cite{11,12,13}.  

 

\section{Simulation example}

As a demonstration of the practical calculation of self and mutual winding helicities, we apply equations (\ref{self}) and (\ref{mutual}) to the magnetic field described in \cite{22}. The magnetic field consists, originally, of three domed subdomains where the magnetic field only connects to the lower plane (closed field lines). Outside of these domed regions, there is a subdomain where the magnetic field connects from the lower to the upper boundary (open field lines). This magnetic field structure can be seen in Figure \ref{fig: streamlines}(a). The magnetic field is deformed by applying a horizontal flow on the lower plane, which is a time-periodic driver. The driver causes the deformation of the domes  and the transfer of twist upwards into the domain, see Figure \ref{fig: streamlines}(b). Reconnection (dome to dome and dome to open field) leads to the number of domes changing over time, until they are completely destroyed. Full details of the setup and evolution can be found in \cite{22}. Here we will focus on describing the magnetic field evolution with reference to self and mutual winding helicity.

\begin{figure*}[t!]\begin{center}
\begin{tikzpicture}
\node (t0) at (0,6.5) {\includegraphics[width=0.8\columnwidth]{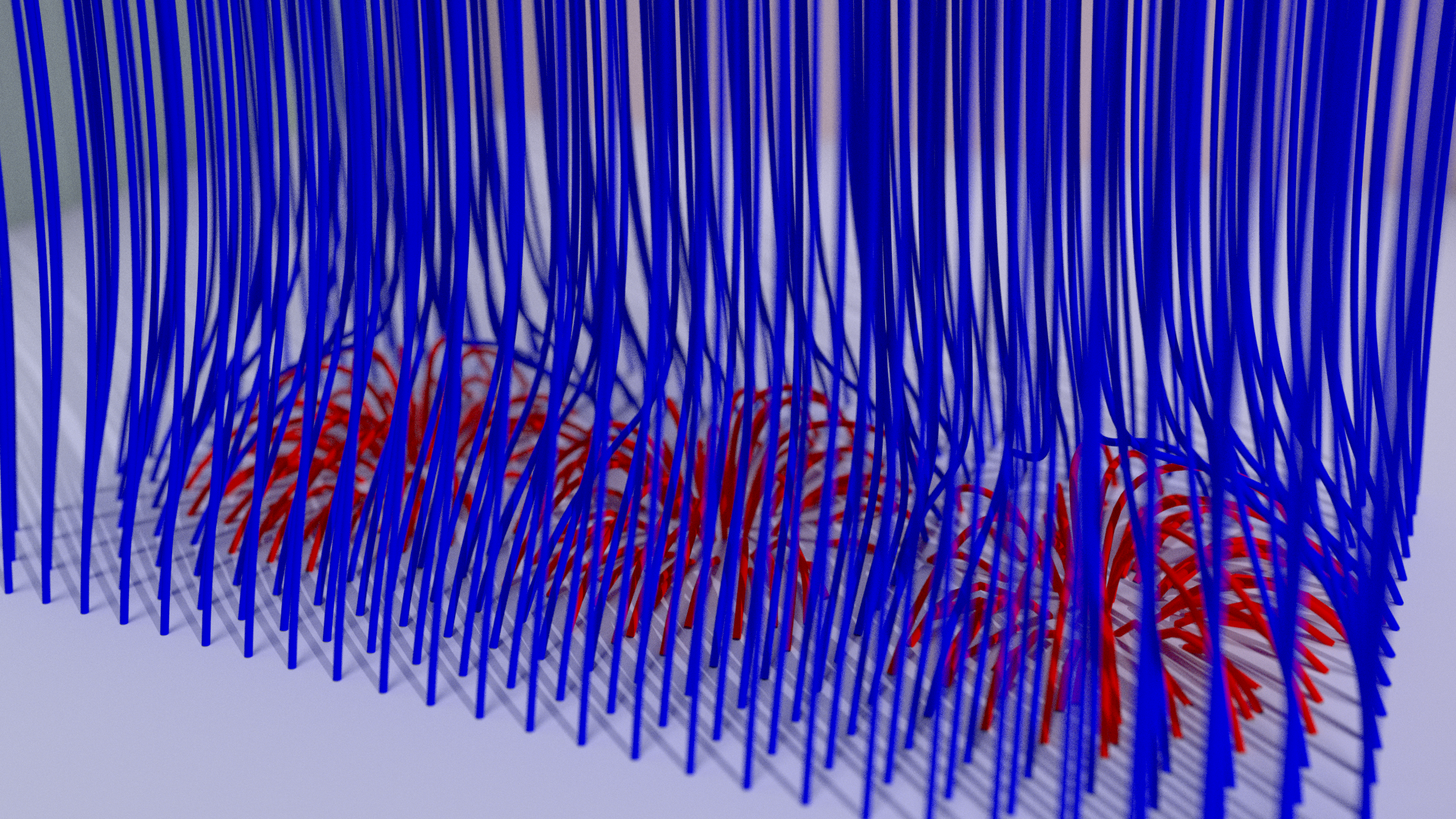}};
\node (t50) at (0,0) {\includegraphics[width=0.8\columnwidth]{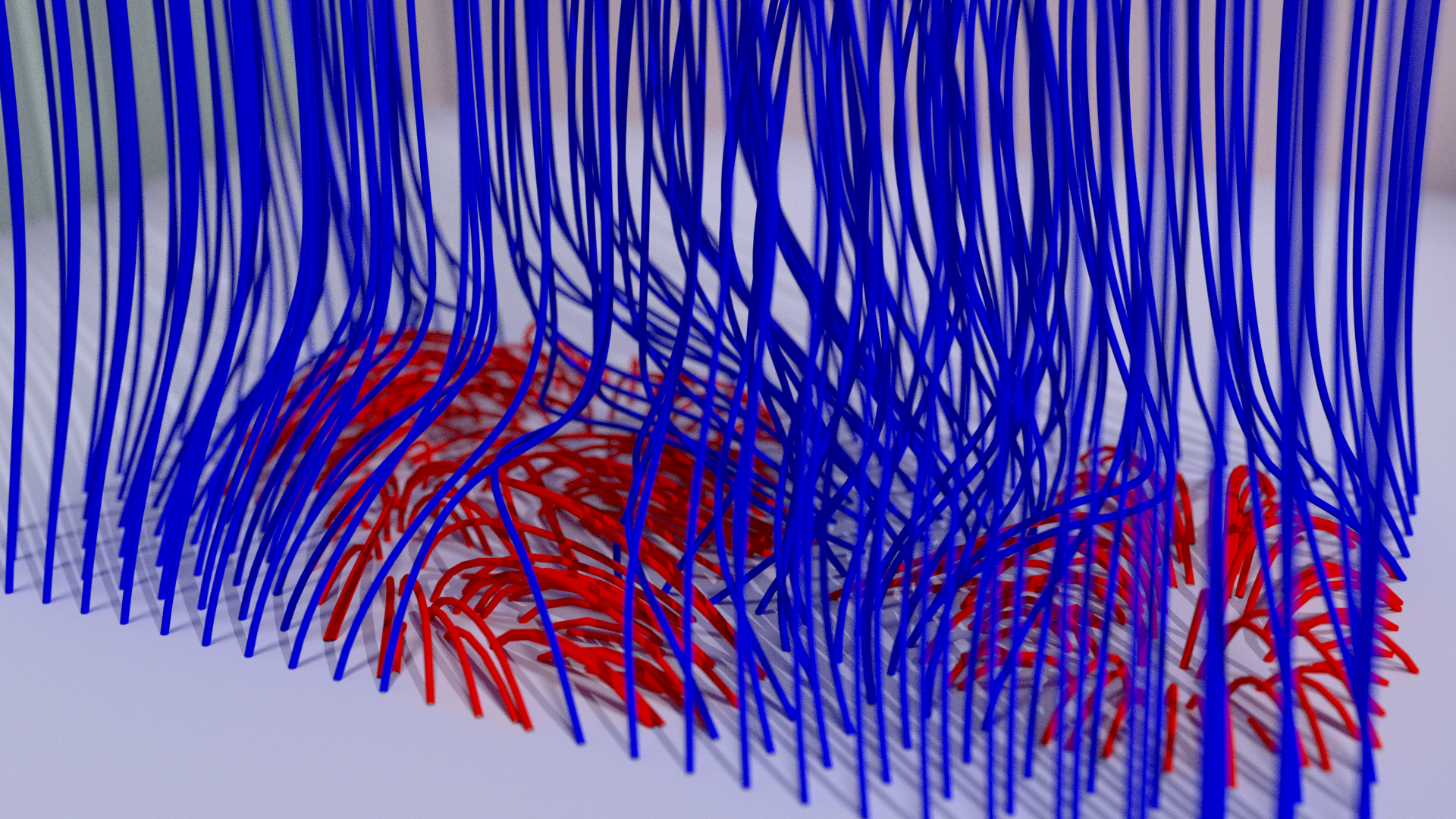}};
\node at (-5,4){\large (a)};
\node at (-5,-2.5){\large (b)};
\end{tikzpicture}
\end{center}
\caption[]{\label{fig: streamlines}
Magnetic field line plots for the simulation example.
Blue field lines are open, while red are closed.
We show the initial configuration in panel (a) and the configuration
at time $t = 50$ in panel (b).
}
\end{figure*}

 At each time step, we determine the locations of the open and closed subdomains by simply plotting numerous field lines, which gives us a good estimate of the open-closed field boundaries.  We calculate, by means of equations (\ref{self}) and (\ref{mutual}), the self winding helicity of the open subdomain $H^{\rm O}_{\rm self}$, the self winding helicity of the combined closed subdomains $H^{\rm C}_{\rm self}$, the total mutual winding helicity of the open subdomain with all the closed subdomains $H^{\rm OC}_{\rm mutual}$ and the total mutual winding helicity of the closed subdomains $H^{\rm C}_{\rm mutual}$. Figure \ref{fig: helicities} displays the time series for all the helicities listed above. The driver on the lower boundary of the domain twists the magnetic field over a period of $t=32$. Positive (anticlockwise) and negative (clockwise) twists are applied in an alternating fashion. 

There is an initial positive twist which acts to deform two of the domed regions. This leads to an increase in the self helicity $H^{\rm C}_{\rm self}$ of the closed magnetic field. Due to the nature of the driver applied \cite{22}, the mutual helicity $H_{\rm mutual}^{\rm C}$ remains very small. At $t\approx23$, reconnection causes a transference of the self helicity $H^{\rm C}_{\rm self}$ in the domes to a sharp increase in the mutual heilcity between the domes $H^{\rm C}_{\rm mutual}$. The open field lines are also deformed, both directly by the driver and by the deformation of the domes. There is a large input of $H^{\rm O}_{\rm self}$, whose negative values indicate that, overall, the open field lines are winding in the opposite sense to those in the domes. As both the open and domed subdomains deform, the mutual helicity $H^{\rm OC}_{\rm mutual}$ between them also increases. The discrepancy in the size of the helicities in the open and closed subdomains can be put down to how the field lines are connected in these domains. The closed field lines have high curvature, and so have high magnetic tension. It is more difficult to deform these regions and increase the winding of field lines compared to the open subdomain, where the tension of field lines is much weaker.

Subsequent twists by the driver lead to cyclic behaviour in all of the helicities. Only $H^{\rm O}_{\rm self}$ increases in magnitude with time, whereas the rest decrease to zero by $t\approx150$, when the domes cease to exist. Hence the decay of all the helicities involving the closed subdomains. The twist injected into the domain by the driver transfers more and more to the open field. After the destruction of the domes, the open field occupies the entire domain.

\begin{figure*}[t!]\begin{center}
\includegraphics[width=0.75\columnwidth]{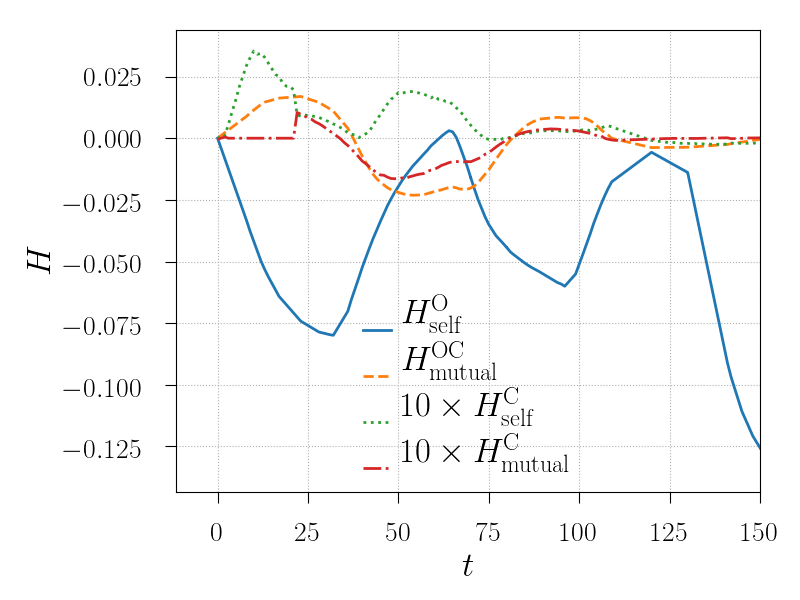}
\end{center}
\caption[]{\label{fig: helicities}
Time evolution of the self and mutual helicities from the
test simulation consisting of initially three separatrix domes
that are being periodically driven in time. The deformation of domes leads to their destruction at $t\approx 150$, when all of the helicity is due to the open magnetic field.
}
\end{figure*}

\section{Conclusions}
We have shown that the winding helicity representation of  relative helicity, for magnetic fields between two horizontal planes, allows for a self-mutual decomposition, similar to that of classical helicity. Our results generalize previous expressions for the self and mutual relative helicity of thin and discrete flux tubes. We further show that the self-mutual decomposition can be extended to a space-time domain, thus showing that helicity flux can also have such a decomposition. We illustrate the practical calculation of self and mutual helicity in a resistive MHD simulation.

Our self-mutual helicity decomposition is based on winding helicity. Thus, the underlying (field line) topological structure of both self and mutual helicity is the average pairwise winding of magnetic field lines. In other studies, such as \cite{23}, self (relative) helicities can be found in subdomains, but they require reference magnetic fields to be calculated in these subdomains. Depending on the topology of the subdomain itself, this approach can be challenging to apply in practice and  make understanding the field line topology, based on the relative helicity, difficult. Winding helicity avoids these problems and gives a clear interpretation for the field line topology in every subdomain. {This property is expected to be important in the analysis of models of solar coronal mass ejections, where self and mutual winding helicities can be used to understand, and perhaps predict, the development and onset of eruptions (see \cite{26} for a typical set-up). The application of self and mutual winding helicities to solar eruptions will be developed in future research.}  Due to the expressions of self and winding helicity being relatively simple to calculate, we expect these to become a useful tool in the analysis of magnetic topology in MHD simulations.

\end{document}